\title[Area-charge inequalities and rigidity of initial data sets]{Area-charge inequalities and rigidity of time-symmetric initial data sets}
\author[T.~Cruz]{Tiarlos Cruz}
\author[A.~Mendes]{Abraão Mendes}
\address{Institute of Mathematics, Federal University of Alagoas, 57072-970, 
	\linebreak\indent Maceió-AL, Brazil}
\email{\href{mailto: cicero.cruz@im.ufal.br}{cicero.cruz@im.ufal.br}}
\email{\href{mailto: abraao.mendes@im.ufal.br}{abraao.mendes@im.ufal.br}}
\subjclass[2020]{53A10, 53C24, 49Q05}
\keywords{Rigidity, Area-minimizing surfaces, $\mu$-bubbles, Area-charge inequality}
\numberwithin{equation}{section}
\renewcommand{\div}{\operatorname{div}}
\newtheorem{theorem}{Theorem}%[section]
\newtheorem{proposition}[theorem]{Proposition}%[section]
\theoremstyle{definition}
\newtheorem{definition}[theorem]{Definition}
\newcommand{\g}{\mathfrak{g}}
\newcommand{\p}{\partial}
\begin{document}

\begin{abstract}
In this paper, we establish new area-charge inequalities for the boundary of time-symmetric Einstein-Maxwell initial data sets, in both compact and noncompact cases, under the dominant energy condition. These inequalities lead to novel rigidity theorems with no analogues in the uncharged setting. In the noncompact case, our result is obtained by applying Gromov's $\mu$-bubble technique in a new geometric context.
\end{abstract}

\maketitle

\section{Introduction}

Inequalities relating the physical quantities of black holes -- such as mass, charge, and angular momentum -- have been a central topic in mathematical relativity and geometric analysis. Numerous works have explored the geometric and physical constraints arising from the interplay among these quantities and the horizon area, particularly in the context of stable minimal surfaces, which are especially significant due to their variational characterization (see, e.g., \cite{Gibbons, Mendes2025}). Additional contributions include studies on isoperimetric surfaces \cite{DainJaramilloReiris}, stable marginally outer trapped surfaces (MOTS) \cite{Simon}, and min-max constructions leading to area-charge inequalities \cite{Cruz}. For a comprehensive overview of these and other geometric inequalities, we refer the reader to \cite{DainGabach-Clement}.

Using variational and minimal surface techniques, we establish sharp area-charge estimates for time-symmetric charged initial data sets with boundary, valid across compact and noncompact settings where the dominant energy condition holds.

Consider an oriented Riemannian 3-manifold $(M^3,g)$ and a vector field representing the electric field $E\in\mathfrak{X}(M)$. The triple $(M^3,g,E)$ can be viewed as part of a time-symmetric initial data set for the Einstein-Maxwell equations. In this context, the \emph{charged dominant energy condition} for the non-electromagnetic matter fields implies (see Section~\ref{models}):
$$
R_g \geq 2\Lambda + 2|E|^2,
$$
where $R_g$ is the scalar curvature of $(M^3,g)$ and $\Lambda$ is a constant representing the cosmological constant. In this framework, it is natural to associate with any closed orientable embedded surface $\Sigma\subset M$ the notion of enclosed \emph{charge}, defined by the flux integral
\begin{equation*}
Q(\Sigma)=\frac{1}{4\pi}\int_{\Sigma}\langle E,N\rangle, 
\end{equation*}
where $N$ is a unit normal vector field along $\Sigma$. Assuming that $\div_g E = 0$, it follows from the divergence theorem that the charge depends only on the homology class of $\Sigma$.

Our first result establishes geometric inequalities for charged Riemannian 3-manifolds with boundary. Given a boundary component $\Sigma$, the theorem relates the cosmological constant, the electric charge $Q(\Sigma)$, and the boundary geometry, providing constraints on the area of $\Sigma$ as well as on the global structure of $M$.

\begin{theorem}\label{compact_thm}
Let $(M^3,g,E)$ be a compact orientable Riemannian 3-manifold with boundary, equipped with a divergence-free vector field $E$ representing the electric field. Suppose the scalar curvature satisfies
\begin{equation*}
R_g \geq 2\Lambda + 2|E|^2,
\end{equation*}
and assume that the boundary $\p M$ is weakly mean-convex.

Let $\Sigma$ be a connected component of $\p M$ such that $H_2(M,\Sigma) = 0$.

\begin{enumerate}
    \item If $\Lambda > 0$, then 
    \begin{equation*}
    4\Lambda Q(\Sigma)^2 \le 1,
    \end{equation*}
    and
    \begin{equation}\label{equation.2}
    |\Sigma| \ge \frac{2\pi}{\Lambda} \Big(1 - \sqrt{1 - 4\Lambda Q(\Sigma)^2} \Big).
    \end{equation}
    \item If $\Lambda = 0$, then
    \begin{equation}\label{equation.3}
    |\Sigma| \ge 4\pi Q(\Sigma)^2.
    \end{equation}
\end{enumerate}

Moreover, if equality holds in \eqref{equation.2} and \eqref{equation.3}, then $(M^3,g)$ is isometric to the Riemannian product $([0, \ell] \times \Sigma, dt^2 + g_0)$ for some $\ell > 0$, where the induced metric $g_0$ on $\Sigma$ has constant Gaussian curvature $\kappa_g = a^2 + \Lambda$, and $E = aN$. 
In either case, the genus $\g(\Sigma)$ of $\Sigma$ equals zero, assuming the equality holds.
\end{theorem}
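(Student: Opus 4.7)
The natural strategy is to minimize area in the homology class of $\Sigma$, extract the inequality from stability via a Gauss--Bonnet/scalar-curvature computation, and then analyze the equality case by a foliation-based infinitesimal rigidity argument. First, since $H_2(M,\Sigma)=0$, the long exact sequence of $(M,\Sigma)$ shows that $[\Sigma]$ controls the relevant piece of $H_2(M;\mathbb{Z})$; minimizing area in this class, geometric measure theory in dimension three together with the weak mean-convexity of $\p M$ and Federer's maximum principle yields a smooth, two-sided, stable minimal surface $\Sigma'$ with $|\Sigma'|\le|\Sigma|$, and one has $Q(\Sigma')=Q(\Sigma)$ because $\div_g E=0$.

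Next, I would insert $\phi\equiv 1$ into the stability inequality of $\Sigma'$ and rewrite the integrand via the Gauss equation to obtain
\begin{equation*}
\int_{\Sigma'}\left(\tfrac{1}{2}|A|^2+\tfrac{1}{2}R_g\right)\le \int_{\Sigma'}K_{\Sigma'}=4\pi(1-\g(\Sigma')).
\end{equation*}
Substituting $R_g\ge 2\Lambda+2|E|^2$ and bounding $\int_{\Sigma'}|E|^2\ge 16\pi^2 Q(\Sigma')^2/|\Sigma'|$ by Cauchy--Schwarz gives the quadratic estimate
\begin{equation*}
\Lambda|\Sigma'|^2-4\pi(1-\g(\Sigma'))|\Sigma'|+16\pi^2Q(\Sigma')^2\le 0.
\end{equation*}
For this to admit a positive solution in $|\Sigma'|$, a case analysis rules out $\g(\Sigma')\ge 1$ and forces the discriminant bound $4\Lambda Q(\Sigma')^2\le 1$ along with the lower bound \eqref{equation.2} when $\Lambda>0$, while the case $\Lambda=0$ reduces to \eqref{equation.3} directly; since $|\Sigma|\ge|\Sigma'|$ and $Q(\Sigma)=Q(\Sigma')$, the same bounds transfer to $\Sigma$.

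For the rigidity statement, equality in \eqref{equation.2} or \eqref{equation.3} forces every intermediate step to be tight: $|\Sigma|=|\Sigma'|$ (so $\Sigma$ itself is area-minimizing, and hence minimal), $A\equiv 0$ on $\Sigma$, $R_g=2\Lambda+2|E|^2$ pointwise along $\Sigma$, the saturation of Cauchy--Schwarz gives $E=aN$ on $\Sigma$ with $a$ a constant, and $\g(\Sigma)=0$. The first Jacobi eigenvalue of $\Sigma$ is therefore zero and is attained by $\phi\equiv 1$, so $1$ is a first eigenfunction and $\operatorname{Ric}_g(N,N)+|A|^2\equiv 0$ pointwise, whence the Gauss equation delivers $K_\Sigma\equiv a^2+\Lambda$. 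I would then use the positive Jacobi field $1$ and the implicit function theorem (in the style of Bray--Brendle--Neves, Nunes, Cai--Galloway) to construct a CMC foliation $\{\Sigma_t\}$ of a neighborhood of $\Sigma$; the hypothesis $H_2(M,\Sigma)=0$ and the mean-convexity of the remaining components of $\p M$ keep each leaf homologous to $\Sigma$ and area-minimizing in that class, so each leaf saturates the inequality and inherits all of the pointwise identities above. Integrating the resulting lapse equation then forces $H(t)\equiv 0$ and the product metric $dt^2+g_0$ on a maximal slab, which by compactness and the boundary geometry exhausts $M$ as $[0,\ell]\times\Sigma$; finally, $\div_g E=0$ together with $\langle E,N\rangle=a$ on each leaf propagates $E=aN$ globally. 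The main obstacle is this last propagation step: one must verify that the neighboring CMC leaves are indeed area-minimizing in the correct homology class, so that the pointwise (rather than only integral) rigidity identities hold on each of them and can be pieced together into the global product structure asserted by the theorem.
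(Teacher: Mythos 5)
Your proposal follows essentially the same route as the paper: minimize area in the homology class of $\Sigma$ via geometric measure theory (using $H_2(M,\Sigma)=0$ and weak mean-convexity of $\partial M$ to get a connected stable minimizer $\Sigma'$ with $Q(\Sigma')=Q(\Sigma)$), extract the quadratic inequality $\Lambda|\Sigma'|^2-4\pi(1-\g(\Sigma'))|\Sigma'|+16\pi^2Q(\Sigma')^2\le 0$ from stability, Gauss--Bonnet, and Cauchy--Schwarz, and then handle equality by infinitesimal rigidity, a CMC foliation, and a continuity argument -- which is exactly the paper's structure (its Propositions on area estimates and local rigidity plus the continuation step). The only cosmetic differences are that the paper phrases the inequality step as a contradiction argument and controls the foliation via a differential inequality for $H(t)$ rather than by asserting each leaf is area-minimizing, but these are equivalent in substance.
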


The proof of Theorem~\ref{compact_thm}, as well as the subsequent results, relies on a combination of techniques from Geometric Measure Theory and topological constraints. These tools enable the construction of area-minimizing surfaces, that is, surfaces that minimize area within a given class of competitors.  Among minimal surfaces, area-minimizing ones play a central role in various scalar curvature rigidity results, including those in \cite{BrayBrendleEichmairNeves, BrayBrendleNeves, CaiGalloway, ChodoshEichmairMoraru, MarquesNeves, MicallefMoraru, Nunes, Zhu2021}.

In general, when the charge vanishes, most existing results recover classical Riemannian conclusions. However, the results presented here have no counterpart  in the purely Riemannian setting, except for the case when $\Lambda<0$ and $\g(\Sigma)\ge2$ in the next theorem, which recovers \cite[Theorem~5]{Nunes}.

We now extend our analysis to the case of a negative cosmological constant $\Lambda < 0$. Before stating our next result, let us recall that a 3-manifold $M$ is \emph{irreducible} if every embedded 2-sphere in $M$ bounds an embedded 3-ball, and a surface $\Sigma \subset M$, other than a 2-sphere, is \emph{incompressible} if it is $\pi_1$-injective, that is, the induced map $\pi_1(\Sigma)\to\pi_1(M)$ is injective.

\begin{theorem}\label{isotopy_thm}
Let $(M^3,g,E)$ be a compact orientable Riemannian 3-manifold with boundary, equipped with a divergence-free vector field $E$ representing the electric field. Suppose the scalar curvature satisfies
\begin{equation*}
R_g \geq 2\Lambda + 2|E|^2,
\end{equation*}
for some constant $\Lambda<0$, and that the boundary $\p M$ is weakly mean-convex. 

Let $\Sigma$ be a connected component of $\p M$.
\begin{enumerate}
\item If $\inf |E|^2>|\Lambda|$ and $H_2(M,\Sigma)=0$, then
\begin{equation}\label{equation.4}
|\Sigma| \geq \frac{2\pi}{|\Lambda|} \Big( \sqrt{1 + 4|\Lambda| Q(\Sigma)^2}-1 \Big).
\end{equation}
\item If $\Sigma$ is incompressible, and $M$ is irreducible and does not contain any closed non-orientable embedded surfaces, then 
\begin{equation}\label{equation.5}
|\Sigma| \geq \frac{2\pi}{|\Lambda|} \Big( (\g(\Sigma) - 1) + \sqrt{(\g(\Sigma) - 1)^2 + 4|\Lambda| Q(\Sigma)^2} \Big),
\end{equation}
where $\g(\Sigma)$ is the genus of $\Sigma$. 
\end{enumerate}

Moreover, if equality holds in any of the above inequalities, then $(M^3,g)$ is isometric to the Riemannian product $([0, \ell] \times \Sigma, dt^2 + g_0)$ for some $\ell > 0$, where the induced metric $g_0$ on $\Sigma$ has constant Gaussian curvature $\kappa_g = a^2 + \Lambda$, and $E = aN$. 
\end{theorem}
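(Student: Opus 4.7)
The plan is to minimize area in an appropriate class of surfaces homologous or isotopic to $\Sigma$, apply the stability inequality together with the Gauss equation and the Gauss--Bonnet theorem, and reduce each part to a quadratic inequality in the area of the minimizer. The rigidity assertion will then follow from a stability-equality analysis combined with a deformation argument promoting a local product structure to a global one.

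For part (1), the hypothesis $\inf|E|^2 > |\Lambda|$ forces $R_g > 0$ strictly. Combined with $H_2(M,\Sigma)=0$, this allows me to minimize area in the integral homology class of $\Sigma$, using the weak mean-convexity of $\p M$ as a barrier, to obtain a smooth embedded area-minimizing surface $\tilde\Sigma$ with $|\tilde\Sigma|\le|\Sigma|$ and $Q(\tilde\Sigma)=Q(\Sigma)$ (the latter by the divergence theorem, using $\div_g E=0$ and the fact that $\tilde\Sigma$ is homologous to $\Sigma$). Strict positivity of $R_g$ together with the stability inequality forces each component of $\tilde\Sigma$ to be a $2$-sphere, and a connectedness argument (available because $H_2(M;\mathbb{Z})$ is generated by $[\Sigma]$ under the standing hypothesis) reduces matters to the case of a single $2$-sphere. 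Stability with test function $\varphi\equiv 1$, combined with the Gauss equation, the scalar curvature bound, and Gauss--Bonnet, yields
\begin{equation*}
\Lambda|\tilde\Sigma|+\int_{\tilde\Sigma}|E|^2\le 4\pi,
\end{equation*}
while the Cauchy--Schwarz estimate $\int_{\tilde\Sigma}|E|^2\ge 16\pi^2 Q(\Sigma)^2/|\tilde\Sigma|$ converts this into a quadratic inequality in $|\tilde\Sigma|$ whose relevant positive root is precisely the right-hand side of \eqref{equation.4}.

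For part (2), strict positivity of $R_g$ is no longer available, and the topology enters through isotopy rather than homology in order to control the genus. Here I invoke Meeks--Simon--Yau: incompressibility of $\Sigma$, irreducibility of $M$, and the absence of closed non-orientable embedded surfaces guarantee a stable minimal surface $\tilde\Sigma$, isotopic to $\Sigma$ (and hence both homologous to $\Sigma$ and of the same genus), with $|\tilde\Sigma|\le|\Sigma|$. The same chain of estimates---stability, Gauss equation, scalar curvature bound, Gauss--Bonnet, Cauchy--Schwarz---then produces
\begin{equation*}
\Lambda|\tilde\Sigma|+\frac{16\pi^2 Q(\Sigma)^2}{|\tilde\Sigma|}\le -4\pi(\g(\Sigma)-1),
\end{equation*}
and solving this quadratic for its positive root gives \eqref{equation.5}.

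For the rigidity assertion, equality in either \eqref{equation.4} or \eqref{equation.5} forces saturation at every step: $|\tilde\Sigma|=|\Sigma|$, so $\Sigma$ itself is area-minimizing in its class; stability becomes an equality, so $\Sigma$ is totally geodesic and $R_g=2\Lambda+2|E|^2$ pointwise on $\Sigma$; and equality in Cauchy--Schwarz forces $E=aN$ along $\Sigma$ for some constant $a$, after which the equality in Gauss--Bonnet pins the induced Gaussian curvature to the constant $a^2+\Lambda$. An implicit function theorem argument produces a smooth CMC foliation $\{\Sigma_t\}$ of a tubular neighborhood of $\Sigma$; each leaf is again a competitor in the same homology or isotopy class and saturates the area lower bound, which propagates the infinitesimal rigidity outward and, via a standard continuation argument, identifies $(M,g)$ with $([0,\ell]\times\Sigma,dt^2+g_0)$ and $E$ with $aN$ along each leaf. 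The principal technical obstacle I foresee is this final step: extending the infinitesimal rigidity on $\Sigma$ to a global product decomposition requires verifying that every nearby CMC leaf is in fact minimal and carries the same $E$-flux, which is where the divergence-free condition on $E$ and the strong maximum principle must combine in a delicate way.
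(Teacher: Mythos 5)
Your proposal is correct and follows essentially the same route as the paper: area minimization in the homology class (part~1) versus Meeks--Simon--Yau/Hass--Scott minimization in the isotopy class (part~2), followed by the stability inequality with $\varphi\equiv 1$, the Gauss equation, Gauss--Bonnet, and Cauchy--Schwarz to obtain the quadratic inequality $\Lambda|\tilde\Sigma|+16\pi^2Q(\Sigma)^2/|\tilde\Sigma|\le 2\pi\chi(\tilde\Sigma)$, whose positive root gives the stated bounds, and then a CMC-foliation rigidity argument propagated by continuation. The one step you flag as delicate --- showing the nearby CMC leaves are actually minimal --- is handled in the paper by a Gronwall-type estimate on $H(t)$ (using that $\Sigma$ is area-minimizing and that the charge $Q(t)$ is constant in $t$ by the divergence theorem), which is exactly the combination of the divergence-free condition and the minimizing property you anticipated.
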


It is interesting to note that Theorem~\ref{isotopy_thm} imposes no restriction on the genus of $\Sigma$. This contrasts with the uncharged case when $\Lambda < 0$, which only admits surfaces of genus at least two, and thus represents a new phenomenon. Moreover, in the case $\g(\Sigma)=1$, our result can be compared to \cite[Theorem~2]{CaiGalloway}, which asserts that if $M$ is a complete 3-manifold with nonnegative scalar curvature and weakly mean-convex boundary, then the existence of any two-sided torus $\Sigma\subset M$ that minimizes area in its isotopy class implies that $M$ is flat. In contrast, our result allows the presence of a 2-torus $\Sigma$ with an explicit lower bound for its area:
\begin{equation*}
|\Sigma| \ge \frac{4\pi |Q(\Sigma)|}{\sqrt{|\Lambda|}}.
\end{equation*}

For higher genus surfaces ($\g(\Sigma)\ge2)$, our result may also be compared to \cite[Theorem~5]{Nunes}.

The next result extends the previous theorems to the noncompact setting. The key tool in the proof is Gromov's concept of a $\mu$-bubble \cite{Gromov1996, Gromov2023}, which arises as a minimizer of a weighted area functional. More precisely, given a Riemannian manifold $(M^n,g)$ and a function $h$, a $\mu$-bubble is defined as a minimizer or, more generally, a critical point) of the functional
\begin{equation*}
\Omega\mapsto\mathcal{H}^{n-1}(\p\Omega)-\int_{\Omega}h,
\end{equation*}
where $\mathcal{H}^{n-1}$ is the induced $(n-1)$-dimensional Hausdorff measure, and $\Omega\subset M$ ranges over a suitable class of subsets, such as Caccioppoli sets.

Very recently, Gromov's $\mu$-bubble technique has emerged as a powerful tool in geometric analysis, with numerous applications to other geometric problems; see, for instance, \cite{CecchiniRadeZeidler, ChodoshLi2023, ChodoshLi2024, Gromov2020, LesourdUngerYau, Mazet, Wang, Zhu2021, Zhu2023, Zhu2024ArXiv, Zhu2024TAMS}.

\begin{theorem}\label{complete}
Let $(M^3,g,E)$ be a complete, noncompact, orientable Riemannian 3-manifold with connected compact boundary $\p M$, equipped with a divergence-free vector field $E$ representing the electric field. Assume that the scalar curvature satisfies
\begin{equation*}
R_g \geq 2\Lambda + 2|E|^2,
\end{equation*}
where $\Lambda$ is a constant such that $\Lambda + |E|^2$ is uniformly positive, and that the boundary $\p M$ is weakly mean-convex. Assume further that $H_2(M,\p M)=0$.  Then the following inequalities hold:
\begin{enumerate}
\item If $\Lambda\in\mathbb{R}\setminus\{0\}$, then
\begin{equation*}
4\Lambda Q(\p M)^2 \le 1
\end{equation*}
(this is trivial when $\Lambda<0$),
and
\begin{align}\label{iewq1}
|\p M| \ge \frac{2\pi}{\Lambda} \Big(1 - \sqrt{1 - 4\Lambda Q(\p M)^2} \Big).
\end{align}
\item If $\Lambda = 0$, then
\begin{align}\label{iewq2}
|\p M| \ge 4\pi Q(\p M)^2.
\end{align}
\end{enumerate}

Moreover, if equality holds in either \eqref{iewq1} or \eqref{iewq2}, then there exists an isometry
\begin{equation*}
\Phi: ([0,+\infty)\times\p M,dt^2+g_0)\to(M^3,g),
\end{equation*}
such that $\Phi(0, \p M) = \p M$, where the induced metric $g_0$ on $\p M$ has constant Gaussian curvature $\kappa_g = a^2 + \Lambda$, and $E = aN$.
\end{theorem}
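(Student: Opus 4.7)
The plan is to apply Gromov's $\mu$-bubble technique to produce, inside a collar of $\p M$, a smooth closed surface $\tilde\Sigma\subset M$ homologous to $\p M$ that plays the role of an ``artificial second boundary'' and enables a stability argument analogous to that of the compact Theorem~\ref{compact_thm}. The uniform positivity $\Lambda+|E|^2\geq\delta>0$ is the crucial new ingredient: it supplies a mass gap that traps the $\mu$-bubble in a bounded region near $\p M$. Concretely, let $\rho$ be a smoothed distance function to $\p M$ and, for $T>0$ small, choose a smooth function $h_0:(0,T)\to\mathbb{R}$ with opposite-sign barrier behavior ($h_0(t)\to-\infty$ as $t\to 0^+$ and $h_0(t)\to+\infty$ as $t\to T^-$) satisfying the ODE $\tfrac{1}{2}h_0^2+h_0'\geq 0$ on $(0,T)$ (for instance, $h_0(t)=-2/t+2/(T-t)$). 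Minimizing
\[
\mathcal{A}(\Omega)=\mathcal{H}^2(\p^*\Omega\cap\{0<\rho<T\})-\int_\Omega h_0(\rho)\,dV_g
\]
over Caccioppoli sets $\Omega\subset\{\rho<T\}$ containing a neighborhood of $\p M$, standard $\mu$-bubble existence theory produces a smooth minimizer $\Omega^*$ whose interior boundary $\tilde\Sigma$ is a stable $\mu$-bubble with $H_{\tilde\Sigma}=h_0(\rho)$ relative to the outward normal $N$ of $\Omega^*$; moreover $[\tilde\Sigma]=[\p M]$ in $H_2(M;\mathbb{Z})$, so $Q(\tilde\Sigma)=Q(\p M)$.

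Next, apply stability of $\tilde\Sigma$ with the test function $\phi\equiv 1$, combined with the Gauss equation and the trace inequality $|A|^2\geq\tfrac{1}{2}h^2$, the hypothesis $R_g\geq 2\Lambda+2|E|^2$, the Cauchy-Schwarz bound $\int_{\tilde\Sigma}|E|^2\geq(4\pi Q(\p M))^2/|\tilde\Sigma|$, Gauss-Bonnet, and the sign condition $\int_{\tilde\Sigma}(\tfrac{1}{2}h^2+\partial_N h)\geq 0$ (which follows from the ODE, since $\partial_N h=h_0'(\rho)\langle\nabla\rho,N\rangle$ with $\langle\nabla\rho,N\rangle\in[0,1]$ when $\tilde\Sigma$ is a graph over $\p M$). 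One derives
\[
2\Lambda|\tilde\Sigma|^2-4\pi\chi(\tilde\Sigma)|\tilde\Sigma|+32\pi^2 Q(\p M)^2\leq 0,
\]
which forces $\chi(\tilde\Sigma)=2$ and yields the desired lower bound for $|\tilde\Sigma|$, namely the right-hand side of \eqref{iewq1} (or \eqref{iewq2} when $\Lambda=0$). The bound is transferred to $|\p M|$ by testing $\mathcal{A}(\Omega^*)$ against the competitors $\{\rho<s\}$: weak mean-convexity of $\p M$ gives $|\Sigma_s|\leq|\p M|$ for small $s$, and combining with the minimizing property of $\Omega^*$ and letting $T\to 0$ in the construction yields $\limsup_{T\to 0}|\tilde\Sigma_T|\leq|\p M|$; together with the stability bound, this gives \eqref{iewq1}--\eqref{iewq2}.

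For rigidity, equality saturates every intermediate estimate, forcing $\tilde\Sigma$ totally umbilic with $|A|^2=\tfrac{1}{2}h^2$, $E=aN$ pointwise on $\tilde\Sigma$, $\tilde\Sigma$ a level set of $\rho$, and $|\tilde\Sigma|=|\p M|$. Letting $T$ vary sweeps out a foliation of a collar $[0,T_0]\times\p M$ by totally umbilic $\mu$-bubbles, splitting the metric there as $dt^2+g_0$ with $g_0$ of constant Gauss curvature $a^2+\Lambda$ and $E=aN$. This collar splitting is then extended to all of $[0,\infty)\times\p M$ by an open-closed argument: closedness follows from smooth convergence, while openness uses that each slice $\{t\}\times\p M$ inherits the hypotheses (weak mean-convexity, uniform positivity, and the relative homology condition), enabling the $\mu$-bubble construction to be iterated. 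The main obstacle is the simultaneous calibration of $h_0$: it must guarantee existence of the minimizer via the barriers, produce the correct sign in $\int_{\tilde\Sigma}(\tfrac{1}{2}h^2+\partial_N h)\geq 0$ so as to absorb the $h$-terms in stability, and remain compatible with the comparison $|\tilde\Sigma|\leq|\p M|$ as $T\to 0$; propagating the collar rigidity across the entire noncompact end is the other delicate step, and is what makes the $\mu$-bubble framework genuinely necessary rather than a direct minimization.
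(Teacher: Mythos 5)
Your overall strategy (a prescribed-mean-curvature $\mu$-bubble homologous to $\partial M$, stability with $\varphi\equiv 1$, Gauss--Bonnet and Cauchy--Schwarz, then an area comparison back to $\partial M$ and a foliation argument for rigidity) is the right one, but the quantitative setup of the $\mu$-bubble is wrong in a way that breaks the proof. The term produced by the second variation is $\tfrac{3}{2}h^2+2h'\langle\nabla\rho,N\rangle$ (after using $|A|^2\ge\tfrac12 H^2$ and the Gauss equation), not $\tfrac12 h^2+h'$, and since a $\mu$-bubble need not be a graph you can only control it from below by $\tfrac32 h^2-2|h'|$. Your explicit choice $h_0(t)=-2/t+2/(T-t)$ fails this badly: at $t=T/2$ one has $h_0=0$ and $h_0'=16/T^2$, so $\tfrac32 h_0^2-2|h_0'|=-32/T^2$, an error that blows up precisely as $T\to0$. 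In fact no function with your double-barrier behavior on a short interval can satisfy $\tfrac32 h^2-2|h'|\ge0$ while crossing zero (this is the usual band-width obstruction), so the $h$-terms cannot be ``absorbed'' and the clean quadratic inequality in $|\tilde\Sigma|$ does not follow. Trapping the bubble in a thin collar forces huge prescribed mean curvature, which is exactly the wrong regime. The paper does the opposite: it works on a \emph{long} band $\phi^{-1}([0,\tfrac{1}{3\varepsilon}])$ with $h_\varepsilon\le0$, $h_\varepsilon(0)=0$ (so the weakly mean-convex $\partial M$ itself is the barrier on the near side), $h_\varepsilon\to-\infty$ only at the far end, and $|\tfrac32 h_\varepsilon^2+2h_\varepsilon'|\le C\varepsilon\to0$, so the error terms vanish in the limit.

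Two further steps of yours do not go through as stated. First, the transfer $|\tilde\Sigma|\le|\partial M|$: because your $h_0$ changes sign, the volume term in $\mathcal{A}$ has no definite sign, and the claim that weak mean-convexity gives $|\Sigma_s|\le|\partial M|$ for small $s$ is not justified when $H\equiv0$ on $\partial M$ (the first variation vanishes and the second is uncontrolled). The paper gets this comparison for free from $h_{\varepsilon_k}\le0$ via $|\Sigma_k|\le\mu^k(\Omega_k)\le\mu^k(M)=|\partial M|$. Second, the rigidity: ``letting $T$ vary sweeps out a foliation by totally umbilic $\mu$-bubbles'' is not substantiated. The paper instead shows (using that the minimizers cannot drift to infinity, curvature estimates, and the uniform positivity of $\Lambda+|E|^2$ to force the limit to be a compact sphere) that $\partial M$ itself is area-minimizing and saturates the local inequality of Proposition~\ref{area estimate}, and then applies the local splitting of Proposition~\ref{rigidity} together with a continuation argument. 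You would need to rebuild your argument along these lines for both the inequality and the rigidity to hold.
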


The first difficulty in proving Theorem~\ref{complete} is to ensure that the sequence of approximating $\mu$-bubbles converges to a well-defined limiting surface, rather than diverging to infinity. This requires controlling the behavior of the $\mu$-bubbles to guarantee that they do not escape to infinity during the limiting process.

\medskip

\textbf{Outline of the paper.} In Section~\ref{models}, we introduce a family of model solutions that depends on the sign of the cosmological constant. These models serve both as motivating examples and as concrete illustrations of the general results established later in the paper. In Section~\ref{Area-charge estimates}, we derive sharp area-charge estimates under suitable geometric assumptions and prove an auxiliary local rigidity result, which plays an important role in the arguments developed in the subsequent section. Finally, in Section~\ref{proofs}, we present the proofs of Theorems~\ref{compact_thm}, \ref{isotopy_thm}, and \ref{complete}.

\medskip

\textbf{Acknowledgments:} T.~Cruz was partially supported by CNPq, Brazil, under grant numbers 307419/2022-3, 408834/2023-4, 444531/2024-6, and 403770/2024-6. A.~Mendes was partially supported by CNPq, Brazil, under grant number 309867/2023-1. The authors gratefully acknowledge partial support from Coordenação de Aperfeiçoamento de Pessoal de Nível Superior (CAPES/MATH-AMSUD 88887.985521/2024-00), Brazil.

\section{The models}\label{models}

We consider a time-symmetric initial data set (IDS) for the Einstein-Maxwell equations, denoted by $(M^3,g,E,B)$, where $E,B\in\mathfrak{X}(M)$ represent the electric and magnetic fields, respectively. The constraint equations in this setting are given by:
\begin{align*}
R_g - 2(|E|^{2} + |B|^{2}) - 2\Lambda &= 16\pi\mu, \\
\div_g E &= 4\pi\rho, \\
\div_g B &= 0,
\end{align*}
where $\mu$ denotes the energy density of non-electromagnetic matter, and $\rho$ is the electric charge density.  When $\rho \equiv 0$, we say that $(M^3,g,E,B)$ has no charged matter.

The dominant energy condition for the non-electromagnetic matter fields can be expressed in terms of the initial data set as:
\begin{align*}
R_g \geq 2(|E|^{2} + |B|^{2}) + 2\Lambda.
\end{align*}
 This terminology is justified by the well-known fact that the Cauchy development of such an IDS yields a Lorentzian 4-manifold solving the Einstein-Maxwell equations with the prescribed energy and charge densities.

For the sake of simplicity and clarity of exposition, we will set $B=0$ throughout this paper. However, it is not difficult to show that the results remain valid, with appropriate adaptations, in the present of magnetic fields.

We now present models that illustrate Theorems~\ref{compact_thm}, \ref{isotopy_thm}, and \ref{complete}.

\medskip

\noindent\textbf{Cosmological constant $\Lambda > 0$.} The first example we consider comes from the dS Bertotti-Robinson solution to the Einstein-Maxwell equations, given by
\begin{equation*}
\begin{cases}
ds^2 = \displaystyle\frac{1}{A}(-\sinh^2 \chi\, dT^2 + d\chi^2) + \frac{1}{B} d\Omega^2,\\
F = \displaystyle -Q \frac{B}{A} \sinh \chi\, dT \wedge d\chi,
\end{cases}
\end{equation*}
where $d\Omega^2$ denotes the round metric on $S^2$ of constant Gaussian curvature one. The parameters $A$ and $B$ are related to the cosmological constant $\Lambda$ and electric charge $Q$ through the identities
\begin{equation*}
\Lambda = \frac{B - A}{2} > 0, \quad Q^2 = \frac{A + B}{2B^2}.
\end{equation*}

Above, $F$ is the Faraday tensor associated with the Bertotti-Robinson solution. The corresponding electric vector field induced on a $T$-slice is given by
\begin{equation*}
E = QB \sqrt{A}\, \p_\chi = QBN,
\end{equation*}
where $N = \sqrt{A}\, \p_\chi$ is the unit normal in the $\chi$-direction. Taking $a =\langle E, N\rangle= QB$, direct computations show that the triple $([\chi_0, \chi_1] \times S^2, g, E)$, with
\begin{equation*}
g = \frac{1}{A} d\chi^2 + \frac{1}{B} d\Omega^2,
\end{equation*}
satisfies all the assumptions of Theorem~\ref{compact_thm} with $4\Lambda Q^2 < 1$ and equality in \eqref{equation.2}.

Our second example arises from the Nariai-Bertotti-Robinson solution:
\begin{align*}
\begin{cases}
ds^2 = -dT^2 + d\chi^2 + \dfrac{1}{2\Lambda} d\Omega^2,\\
F = -\sqrt{\Lambda}\, dT \wedge d\chi.
\end{cases}
\end{align*}
In this case, the electric vector field is given by $E = \sqrt{\Lambda}\, \p_{\chi}$. Taking $a = \sqrt{\Lambda}$, straightforward computations show that $([\chi_0,\chi_1] \times S^2, g, E)$, with
\begin{align*}
g = d\chi^2 + \frac{1}{2\Lambda} d\Omega^2,
\end{align*}
satisfies all the hypotheses of Theorem~\ref{compact_thm} with equality in \eqref{equation.2} and, in particular, in \eqref{equation.2}. The charge of each slice $\{\chi\} \times S^2$ is given by $Q = \frac{1}{2\sqrt{\Lambda}}$.

\medskip

\noindent\textbf{Cosmological constant $\Lambda = 0$.} Taking $A = B$ in the previous example, we obtain the so-called flat Bertotti-Robinson solution, which, as before, gives rise to an initial data set satisfying all the hypotheses of Theorem~\ref{compact_thm} with $\Lambda = 0$ and equality in \eqref{equation.3}.

\medskip

\noindent\textbf{Cosmological constant $\Lambda < 0$.} Our final example is a class of exact solutions that includes the AdS Bertotti-Robinson and the anti-Nariai spacetimes. These represent a natural generalization of the dS Bertotti-Robinson solution to the case where the cosmological constant is negative ($\Lambda < 0$):
\begin{align*}
\begin{cases}
ds^2 = \dfrac{1}{A}(-\sinh^2\chi\,dT^2 + d\chi^2) + \dfrac{1}{B}d\Omega_k^2, \\
F = -Q_{\rm ADM}\dfrac{B}{A}\sinh\chi\,dT \wedge d\chi,
\end{cases}
\end{align*}
where $d\Omega_k^2$ denotes a metric of constant Gaussian curvature $k = 1$, $0$, or $-1$ on a closed orientable surface $\Sigma$ of genus $\g(\Sigma) = 0$, $1$, or $\ge 2$ depending on whether $k = 1$, $0$, or $-1$, respectively.

The constants $A$ and $B$ are related to the cosmological constant $\Lambda$ and the ADM charge $Q_{\rm ADM}$ through the identities
\begin{align*}
\Lambda = -\frac{1}{2}(A - Bk) < 0, \quad Q_{\rm ADM}^2 = \frac{A + Bk}{2B^2}.
\end{align*}

As in the previous example, the associated electric vector field on a spacelike $T$-slice is given by
\begin{equation*}
E = Q_{\rm ADM}BN,
\end{equation*}
where $N = \sqrt{A}\,\p_\chi$.

Once more, letting $a = Q_{\rm ADM}B$, a direct computation shows that the triple $([\chi_0,\chi_1]\times\Sigma, g, E)$, where
\begin{align*}
g = \frac{1}{A}d\chi^2 + \frac{1}{B}d\Omega_k^2,
\end{align*}
satisfies the respective hypotheses of Theorem~\ref{isotopy_thm}, achieving equality in \eqref{equation.4} for $k=1$, or in \eqref{equation.5} for $k=0,-1$. Note that the electric charge of each slice $\{\chi\}\times\Sigma$ is given by $$Q=Q_{\rm ADM}\frac{\omega_k}{4\pi},$$ where $\omega_k$ is the area of $(\Sigma,d\Omega_k^2)$.

Finally, we point out that all the above metrics can be extended to complete smooth metrics on $([\chi_0, +\infty) \times \Sigma, g, E)$, thereby providing models for Theorem~\ref{complete}.

These and other special solutions to the Einstein-Maxwell equations can be found in \cite{CardosoDiasLemos}.

\section{Area-charge estimates and an auxiliary local rigidity result}\label{Area-charge estimates}

The goal of this section is to prove some area-charge estimates and to establish an auxiliary local rigidity result that will be instrumental in the proof of the main theorems.

\subsection{Area-charge estimates}

We begin by deriving area-charge estimates that relate the sign of the cosmological constant to the topology and the electric charge of a weakly outermost or area-minimizing surface.

\begin{proposition}\label{area estimate}
Let $(M^3,g)$ be a Riemannian 3-manifold with boundary, equipped with a divergence-free vector field $E$ representing the electric field. Suppose the scalar curvature of $(M^3,g)$ satisfies
\begin{align*}
R_g \ge 2\Lambda + 2|E|^2,
\end{align*}
for some constant $\Lambda$.

Assume further that a closed connected component $\Sigma$ of $\partial M$ is weakly mean-convex and either weakly outermost or area-minimizing. Then $\Sigma$ is minimal, and one of the following cases holds:

\begin{itemize}
\item[a)] \textbf{Case $\Lambda > 0$.} The surface $\Sigma$ is topologically a 2-sphere ($\g(\Sigma) = 0$), and its charge satisfies $4\Lambda Q(\Sigma)^2 \le 1$. In this case, the area of $\Sigma$ obeys
\begin{align}\label{eq1}
\frac{2\pi}{\Lambda}\Big(1 - \sqrt{1 - 4\Lambda Q(\Sigma)^2}\Big)
\le |\Sigma| \le
\frac{2\pi}{\Lambda}\Big(1 + \sqrt{1 - 4\Lambda Q(\Sigma)^2}\Big).
\end{align}

\item[b)] \textbf{Case $\Lambda = 0$.} We have $\g(\Sigma) = 0$ or $\g(\Sigma) = 1$.

\begin{itemize}
\item If $\g(\Sigma) = 0$, then
\begin{align}\label{eq2}
|\Sigma| \ge 4\pi Q(\Sigma)^2.
\end{align}

\item If $\g(\Sigma) = 1$, then necessarily $Q(\Sigma) = 0$.
\end{itemize}

\item[c)] \textbf{Case $\Lambda < 0$.} There is no restriction on the genus of $\Sigma$ ($\g(\Sigma) \ge 0$), and its area satisfies
\begin{align}\label{eq3}
|\Sigma| \ge \frac{2\pi}{|\Lambda|}\Big((\g(\Sigma) - 1) + \sqrt{(\g(\Sigma) - 1)^2 + 4|\Lambda| Q(\Sigma)^2}\Big).
\end{align}
\end{itemize}
Here, $Q(\Sigma)$ denotes the electric charge enclosed by $\Sigma$, and $\g(\Sigma)$ represents the genus of $\Sigma$.
\end{proposition}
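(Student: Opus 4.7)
The plan is to first show that $\Sigma$ is a stable minimal surface under either hypothesis, and then combine the stability inequality for the test function $f\equiv 1$ with the scalar curvature bound, Gauss--Bonnet, and Cauchy--Schwarz applied to the charge flux. In the area-minimizing case, both minimality ($H\equiv 0$) and stability are immediate from the first and second variations of area. In the weakly outermost case, one argues that if $H>0$ at some point of $\Sigma$, or the stability quadratic form has a negative direction, then a small outward perturbation (constructed via mean curvature flow or a $\mu$-bubble/Jang-type trick) would produce a homologous surface strictly to the exterior side of $\Sigma$ with mean curvature $\leq 0$, contradicting the weakly outermost condition.

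With stable minimality in hand, I would apply the stability inequality to $f\equiv 1$ to obtain $\int_\Sigma(\mathrm{Ric}(N,N)+|A|^2)\leq 0$. The Gauss equation for a minimal surface gives $2K_\Sigma = R_g - 2\mathrm{Ric}(N,N) - |A|^2$, and combining this with Gauss--Bonnet yields
$$
\int_\Sigma R_g \,+\, \int_\Sigma |A|^2 \;\leq\; 4\pi\chi(\Sigma) \;=\; 8\pi(1-\g(\Sigma)).
$$
Inserting $R_g\geq 2\Lambda + 2|E|^2$, discarding the non-negative $|A|^2$ term, and applying Cauchy--Schwarz
$$
(4\pi Q(\Sigma))^2 \;=\; \left(\int_\Sigma \langle E,N\rangle\right)^2 \;\leq\; |\Sigma|\int_\Sigma |E|^2,
$$
so that $\int_\Sigma|E|^2 \geq 16\pi^2 Q(\Sigma)^2/|\Sigma|$, produces the master quadratic inequality
$$
\Lambda |\Sigma|^2 \,-\, 4\pi(1-\g(\Sigma))\,|\Sigma| \,+\, 16\pi^2 Q(\Sigma)^2 \;\leq\; 0.
$$

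The remainder is a routine case analysis of this quadratic in the positive variable $|\Sigma|$. For $\Lambda>0$, the quadratic opens upward, so $|\Sigma|$ sits between its two roots; since the product of the roots $16\pi^2 Q(\Sigma)^2/\Lambda$ is non-negative and $|\Sigma|>0$, both roots must be non-negative, forcing $1-\g(\Sigma)\geq 0$. Genus $1$ would force $Q(\Sigma)=0$ together with both roots equal to zero, which is incompatible with $|\Sigma|>0$; hence $\g(\Sigma)=0$, the discriminant condition gives $4\Lambda Q(\Sigma)^2\leq 1$, and the two roots give exactly the bracketing bounds in \eqref{eq1}. For $\Lambda=0$, the inequality reduces to $(1-\g(\Sigma))|\Sigma|\geq 4\pi Q(\Sigma)^2$, which yields \eqref{eq2} when $\g(\Sigma)=0$, forces $Q(\Sigma)=0$ when $\g(\Sigma)=1$, and excludes $\g(\Sigma)\geq 2$. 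For $\Lambda<0$, rewriting as $|\Lambda||\Sigma|^2 + 4\pi(1-\g(\Sigma))|\Sigma| - 16\pi^2 Q(\Sigma)^2\geq 0$ with positive leading coefficient, the smaller root is non-positive, so $|\Sigma|$ must exceed the larger root, which is precisely the bound \eqref{eq3}.

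The main conceptual obstacle is the opening step of upgrading the weakly outermost hypothesis to stable minimality, since it requires constructing an outward-deforming competitor and invoking a maximum principle or mean-curvature-flow argument. The remainder, although it involves some care in tracking constants, is a standard stability-plus-Gauss--Bonnet calculation followed by an algebraic dissection of the resulting quadratic.
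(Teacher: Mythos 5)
Your proposal is correct and follows essentially the same route as the paper: establish stable minimality (via a mean-convex deformation argument in the weakly outermost case), apply the stability inequality with the constant test function, combine the traced Gauss equation, Gauss--Bonnet, the scalar curvature bound, and Cauchy--Schwarz on the flux to obtain the quadratic inequality $\Lambda|\Sigma|^2 - 4\pi(1-\g(\Sigma))|\Sigma| + 16\pi^2 Q(\Sigma)^2 \le 0$, and then do the case analysis on the sign of $\Lambda$. The only difference is that the paper states the rearranged stability inequality directly and defers the algebraic case analysis to a cited reference, whereas you spell both out; the content is identical.
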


\begin{proof}
Let $N$ be the inward-pointing unit normal vector to $\Sigma$, and let $H = \div_\Sigma N$ denote the mean curvature of $\Sigma$ with respect to $N$. The boundary component $\Sigma$ being weakly mean-convex means that its mean curvature vector $\vec{H} = -HN$ points into the manifold $M$, which is equivalent to $H \leq 0$.

We claim that $\Sigma$ is minimal. Otherwise, by evolving $\Sigma$ via mean curvature flow, we obtain a surface $\Sigma'$, close to $\Sigma$, whose mean curvature and area satisfy $H < 0$ and $|\Sigma'| < |\Sigma|$. This contradicts the assumption that $\Sigma$ is either weakly outermost or area-minimizing.

This shows that $\Sigma$ is minimal and either weakly outermost or area-minimizing; in particular, $\Sigma$ is a stable minimal surface. Therefore, for any $u \in C^\infty(\Sigma)$, we have
\begin{align*}
\frac{1}{2} \int_\Sigma (R_g + |A|^2) u^2 \leq \int_\Sigma |\nabla u|^2 + \int_\Sigma \kappa_g u^2,
\end{align*}
where $A$ denotes the second fundamental form of $\Sigma$, and $\kappa_g$ is the Gaussian curvature of $\Sigma$.

Taking $u \equiv 1$, using the inequality $R_g \geq 2\Lambda+2|E|^2$ and the Gauss-Bonnet theorem, we obtain
\begin{align*}
\Lambda|\Sigma|+\int_\Sigma |E|^2 \leq \int_\Sigma \kappa_g = 2\pi\chi(\Sigma),
\end{align*}
where $\chi(\Sigma)$ is the Euler characteristic of $\Sigma$. On the other hand, by the Cauchy-Schwarz inequality,
\begin{align*}
16 \pi^2 Q(\Sigma)^2 = \bigg(\int_\Sigma \langle E, N \rangle \bigg)^2 \leq|\Sigma| \int_\Sigma |E|^2.
\end{align*}
Therefore,
\begin{align}\label{eq:thm:local.aux1}
\Lambda |\Sigma| + \frac{16 \pi^2 Q(\Sigma)^2}{|\Sigma|} \leq 2\pi\chi(\Sigma).
\end{align}

It is straightforward to verify that inequality \eqref{eq:thm:local.aux1} leads to the area bounds stated in the proposition, depending on the sign of the cosmological constant $\Lambda$ and the topology of $\Sigma$ via the Euler characteristic (see, e.g., the proof of \cite[Proposition~3.1]{GallowayMendes2025}). We remark that, in the case $\Lambda > 0$, the proof of the inequality $4\Lambda Q(\Sigma)^2 \le 1$ can also be found in \cite{GallowayMendes2025}.
\end{proof}

\subsection{Auxiliary local rigidity result}

In this subsection, we establish the local rigidity result that follows from Proposition~\ref{area estimate}.

We begin by noting an infinitesimal rigidity statement: if equality holds in \eqref{eq1}, \eqref{eq2}, or \eqref{eq3}, then all inequalities in the proof of Proposition~\ref{area estimate} must, in fact, be equalities. This observation leads to the following:

\begin{proposition}\label{prop:infinitesimal}
 If $\Sigma$ attains the equality in \eqref{eq1}, \eqref{eq2}, and \eqref{eq3}, then the electric field satisfies $E = a N$ along $\Sigma$ for some constant $a$, $\Sigma$ is totally geodesic, $R_g = 2a^2 + 2\Lambda$ on $\Sigma$, and the function $u_0 \equiv 1$ is a Jacobi function on $\Sigma$, that is, 
\begin{align*}
 \Delta u_0 + \frac{1}{2}(R_g - 2\kappa_g + |A|^2)u_0 = 0.
 \end{align*}
 In particular, $\Sigma$ has constant Gaussian
curvature equal to $a^2 + \Lambda$ with respect to the induced metric.
\end{proposition}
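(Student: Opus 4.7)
The plan is to trace back through the proof of Proposition~\ref{area estimate} and observe which inequalities were used; whenever equality holds in \eqref{eq1}, \eqref{eq2}, or \eqref{eq3}, every intermediate inequality must be sharp. Concretely, the derivation rests on four inputs: (i) the stability inequality with $u\equiv 1$, with the $|A|^2$ term dropped; (ii) the scalar curvature lower bound $R_g\ge 2\Lambda+2|E|^2$; (iii) the pointwise bound $\langle E,N\rangle^2\le|E|^2$; and (iv) the integral Cauchy--Schwarz inequality $\left(\int_\Sigma\langle E,N\rangle\right)^2\le|\Sigma|\int_\Sigma\langle E,N\rangle^2$. Equality in the area bound forces equality in all four.

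First I would extract the electric information. Equality in (iii) yields $E=\alpha N$ along $\Sigma$ for some (a priori non-constant) function $\alpha$, while equality in (iv) forces $\langle E,N\rangle=\alpha$ to be constant on $\Sigma$; together these give $E=aN$ for some constant $a$, so in particular $|E|^2\equiv a^2$ on $\Sigma$. Equality in (ii) then reads $R_g=2\Lambda+2a^2$ along $\Sigma$. Next, keeping the previously dropped curvature term, the stability inequality with $u\equiv 1$ combined with Gauss--Bonnet gives
\begin{equation*}
2\pi\chi(\Sigma)\;=\;\int_\Sigma\kappa_g\;\ge\;\Lambda|\Sigma|+\int_\Sigma|E|^2+\tfrac12\int_\Sigma|A|^2,
\end{equation*}
and since the area estimate was obtained by dropping only the $|A|^2$ contribution, equality forces $|A|^2\equiv 0$, i.e., $\Sigma$ is totally geodesic.

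It remains to identify $u_0\equiv 1$ as a Jacobi function. The Jacobi operator of a two-sided minimal surface in a 3-manifold is $L=\Delta+|A|^2+\operatorname{Ric}(N,N)$, and the Gauss equation for a minimal surface rewrites this as
\begin{equation*}
Lu\;=\;\Delta u+\tfrac12\bigl(R_g-2\kappa_g+|A|^2\bigr)u.
\end{equation*}
The stability inequality with test function $u_0$ is precisely $\int_\Sigma u_0\,(-Lu_0)\ge 0$, and equality throughout the argument means that $u_0$ realizes the infimum of the associated quadratic form at the value zero. Since $u_0$ is positive, it is a ground-state eigenfunction with eigenvalue zero, hence $Lu_0=0$, which is the Jacobi equation claimed. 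Finally, inserting $u_0\equiv 1$, $|A|^2\equiv 0$, and $R_g=2(a^2+\Lambda)$ into $Lu_0=0$ yields $\kappa_g=a^2+\Lambda$, a constant.

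The arguments are routine once the logical bookkeeping is set up; the only mildly delicate point is making sure that the two Cauchy--Schwarz-type inequalities are disentangled correctly so that one extracts both the alignment $E\parallel N$ and the constancy of $\langle E,N\rangle$, yielding a single constant $a$ rather than a function. Everything else is a direct consequence of tracking where each inequality in the proof of Proposition~\ref{area estimate} is used.
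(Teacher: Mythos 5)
Your proposal is correct and follows exactly the route the paper intends: the paper states Proposition~\ref{prop:infinitesimal} as an immediate consequence of the observation that equality in \eqref{eq1}, \eqref{eq2}, or \eqref{eq3} forces every intermediate inequality in the proof of Proposition~\ref{area estimate} to be an equality, and your write-up simply makes that bookkeeping explicit (correctly disentangling the pointwise bound $\langle E,N\rangle^2\le|E|^2$ from the integral Cauchy--Schwarz step, recovering $|A|^2\equiv 0$ from the dropped curvature term, and obtaining $Lu_0=0$ from the vanishing of the stability quadratic form at its minimizer $u_0\equiv 1$). No gaps; this matches the paper's argument in substance and in detail.
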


We now establish the following local rigidity result, which will be essential in the proof of our main theorems. The arguments follow the same geometric construction employed in \cite{GallowayMendes2025}, \cite{LimaSousaBatista}, and \cite{Mendes2025} for analogous rigidity statements. We prove it here for the sake of completeness and convenience of the reader.

\begin{proposition}\label{rigidity}
If $\Sigma$ attains the equality in \eqref{eq1}, \eqref{eq2}, or \eqref{eq3}, then there exists a neighborhood $U \cong [0,\delta) \times \Sigma$ of $\Sigma$ in $M$ such that:
\begin{enumerate}
\item The electric field is normal to the foliation; more precisely, $E = a N_t$ for some constant $a$, where $N_t$ is the unit normal to $\Sigma_t \cong \{t\} \times \Sigma$ along the foliation.
\item $(U, g|_U)$ is isometric to $([0,\delta) \times \Sigma, dt^2 + g_0)$ for some $\delta > 0$, where the induced metric $g_0$ on $\Sigma$ has constant Gaussian curvature $\kappa_g = a^2 + \Lambda$.
\end{enumerate}
\end{proposition}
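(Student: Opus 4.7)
My plan is to propagate the infinitesimal data from Proposition~\ref{prop:infinitesimal} to a full tubular neighborhood of $\Sigma$ by constructing a constant mean curvature (CMC) foliation near $\Sigma$ and then showing that each leaf is minimal, totally geodesic, and isometric to $(\Sigma,g_0)$. This follows the general scheme of \cite{GallowayMendes2025, LimaSousaBatista, Mendes2025}, adapted to the charged setting. First I would seek the foliation as normal graphs $\Sigma_t=\{\exp_p(w(p,t)N(p)):p\in\Sigma\}$ with $w(\cdot,0)\equiv 0$ and $\int_\Sigma w(\cdot,t)\,d\sigma=t|\Sigma|$. Since Proposition~\ref{prop:infinitesimal} gives $Lu_0=0$ for $u_0\equiv 1$, the linearization at $(0,0)$ of the map $(t,w)\mapsto H(\Sigma_{t,w})-\overline H(\Sigma_{t,w})$ restricted to functions orthogonal to $1$ is an isomorphism, so the implicit function theorem yields a smooth family of surfaces $\Sigma_t$ for $t\in[0,\delta)$, each with constant mean curvature $H(t)$ and positive lapse $\phi_t=\p_t w$ normalized so that $\phi_0\equiv 1$.

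The second and most delicate step is to force $H(t)\equiv 0$. Differentiating the mean curvature along the foliation gives $H'(t)=-\Delta_{\Sigma_t}\phi_t-(|A_t|^2+\mathrm{Ric}(N_t,N_t))\phi_t$. Integrating over the closed leaf, substituting the Gauss identity $2\mathrm{Ric}(N_t,N_t)=R_g-2\kappa_{g_t}+H(t)^2-|A_t|^2$, using the scalar curvature hypothesis $R_g\ge 2\Lambda+2|E|^2$, Gauss-Bonnet on $\Sigma_t$, and the Cauchy-Schwarz bound $16\pi^2 Q(\Sigma_t)^2\le|\Sigma_t|\int_{\Sigma_t}|E|^2$ (noting that $Q(\Sigma_t)=Q(\Sigma)$ is preserved because $\div_g E=0$) yields a one-sided inequality on $H(t)$. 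The hypothesis that $\Sigma$ is either weakly outermost or area-minimizing obstructs leaves on the inward side from carrying mean curvature vector into $M$, and thus delivers the opposite sign. Combining, $H(t)\equiv 0$ on $[0,\delta)$.

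Once $H(t)\equiv 0$, every $\Sigma_t$ is a closed minimal surface with the same topology and the same charge as $\Sigma$. By continuity from the equality case at $t=0$, each $\Sigma_t$ also saturates the area-charge inequality of Proposition~\ref{area estimate}, so applying Proposition~\ref{prop:infinitesimal} leaf-by-leaf shows every $\Sigma_t$ is totally geodesic with Gaussian curvature $a(t)^2+\Lambda$ for some constant $a(t)$, and $E=a(t)N_t$ along $\Sigma_t$. Totally geodesic leaves give $\p_t g_t=2\phi_t A_t=0$, so the leaf metric is independent of $t$; equality in the inequalities used in the previous step also forces $\phi_t$ to be constant on each leaf, and after normalization $\phi_t\equiv 1$ one obtains the isometry $(U,g|_U)\cong([0,\delta)\times\Sigma,dt^2+g_0)$. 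In this product metric, $\div_g E=0$ reduces to $a'(t)=0$, so $a(t)$ is the asserted constant.

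The main obstacle is extracting the correct one-sided sign of $H(t)$ in the second step from the charged scalar curvature bound and matching it with the minimization hypothesis. The three sign regimes $\Lambda>0$, $\Lambda=0$, $\Lambda<0$ require slightly different algebraic manipulations, but the common structure of the area-charge inequality in Proposition~\ref{area estimate} keeps the argument essentially uniform across all cases.
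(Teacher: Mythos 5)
Your overall strategy coincides with the paper's: build the CMC foliation from the Jacobi function $u_0\equiv 1$ via the implicit function theorem, integrate the evolution equation for $H(t)$ against $\phi^{-1}$, feed in Gauss--Bonnet, the charged scalar curvature bound, Cauchy--Schwarz, and conservation of charge, force $H\equiv 0$, and then read off the product structure. However, the decisive step --- ``yields a one-sided inequality on $H(t)$'' --- is not justified as written, and this is exactly where the real work lies. What the integration actually produces is
\begin{align*}
H'(t)\int_{\Sigma_t}\frac{1}{\phi}\;\le\; 2\pi\chi(\Sigma)-\Lambda|\Sigma_t|-\frac{16\pi^2Q(\Sigma)^2}{|\Sigma_t|}
\;=\;C\Big(\frac{1}{|\Sigma|}-\frac{1}{|\Sigma_t|}\Big)+\Lambda\big(|\Sigma|-|\Sigma_t|\big),
\end{align*}
where the last equality uses that $\Sigma=\Sigma_0$ saturates \eqref{eq:thm:local.aux1} and $C=16\pi^2Q(\Sigma)^2$. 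This right-hand side has no definite sign: if $|\Sigma_t|\ge|\Sigma|$ the charge term is nonnegative (and for $\Lambda\le 0$ so is the second term), so one cannot conclude $H'(t)\le 0$, let alone a sign for $H(t)$ itself. The paper resolves this by expressing both differences via the first variation of area as $\int_0^t H(s)\xi(s)\,ds$ with $\xi\ge 0$, and then invoking an ODE-comparison lemma (Lemma~3.2 of \cite{Mendes2019}) with the initial condition $H(0)=0$ to deduce $H(t)\le 0$; only then does the weakly outermost or area-minimizing hypothesis upgrade this to $H\equiv 0$ and collapse every intermediate inequality to an equality. Your sketch omits this Gronwall-type ingredient, which is the heart of the proposition, so as it stands the sign argument does not close.

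A smaller point: once $H\equiv 0$ is established, the fact that each leaf saturates the area-charge inequality does not follow ``by continuity'' from the equality at $t=0$; it follows because $H\equiv 0$ forces equality throughout the chain above (in particular $|\Sigma_t|=|\Sigma|$, $\phi$ constant on each leaf, $A_t=0$, and $E=aN_t$). With that correction, the remainder of your third paragraph agrees with the paper's conclusion.
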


\begin{proof}
If equality holds in \eqref{eq1}, \eqref{eq2}, or \eqref{eq3}, then (in view of Proposition~\ref{prop:infinitesimal}) standard arguments (see, e.g., \cite{AnderssonCaiGalloway,BrayBrendleNeves,MicallefMoraru,Nunes}) imply that a collar neighborhood $U \cong [0, \delta) \times \Sigma$ of $\Sigma$ in $M$ can be foliated by constant mean curvature surfaces $\Sigma_t \cong \{t\} \times \Sigma$, with $\Sigma_0 = \Sigma$ and
\begin{align*}
g = \phi^2\,dt^2 + g_t \quad \text{on} \quad U,
\end{align*}
where $g_t$ is the induced metric on $\Sigma_t$. Let $H(t) = \div_{\Sigma_t} N_t$ denote the mean curvature of $\Sigma_t$ with respect to the unit normal vector $N_t = \phi^{-1} \partial_t$.

Recall the following well-known evolution equation for $H(t)$ under normal deformations of surfaces (see, for instance, \cite{HuiskenPolden}):
\begin{align*}
H'(t) = -\Delta_{\Sigma_t} \phi - \frac{1}{2}(R_g - 2\kappa_{\Sigma_t} + |A_{\Sigma_t}|^2 + H(t)^2)\phi.
\end{align*}
Dividing both sides by $\phi$ and integrating over $\Sigma_t$, we obtain
\begin{align*}
H'(t) \int_{\Sigma_t} \frac{1}{\phi}
&\le - \int_{\Sigma_t} \frac{|\nabla_{\Sigma_t} \phi|^2}{\phi^2}
- \frac{1}{2} \int_{\Sigma_t} (R_g + |A_{\Sigma_t}|^2 + H(t)^2)
+ \int_{\Sigma_t} \kappa_{\Sigma_t} \\
&\le - \int_{\Sigma_t} (|E|^2 + \Lambda) + 4\pi(1 - \g(\Sigma)),
\end{align*}
where we used the divergence theorem and the Gauss-Bonnet theorem, along with the inequality $R_g \ge 2|E|^2+2\Lambda$.

On the other hand, the Cauchy-Schwarz inequality implies
\begin{align*}
H'(t) \int_{\Sigma_t} \frac{1}{\phi}
\le -\frac{16\pi^2 Q(t)^2}{|\Sigma_t|} - \Lambda |\Sigma_t| + 4\pi(1 - \g(\Sigma)),
\end{align*}
where $Q(t)$ denotes the charge of $\Sigma_t$. Observe that, since $\div_g E = 0$, the divergence theorem yields
$$
Q(t) = Q(0) = Q(\Sigma), \quad \forall t \in [0, \delta).
$$
Therefore, under the assumption that equality holds in \eqref{eq:thm:local.aux1}, we obtain
\begin{align*}
H'(t) \int_{\Sigma_t} \frac{1}{\phi}
\le C \bigg( \frac{1}{|\Sigma|} - \frac{1}{|\Sigma_t|} \bigg)
+ \Lambda ( |\Sigma| - |\Sigma_t|),
\end{align*}
where $C = 16\pi^2 Q(\Sigma)^2$.

According to our assumptions, we have $|\Sigma| \le |\Sigma_t|$ for every $t \in [0, \delta)$. This is immediate when $\Sigma$ is area-minimizing. On the other hand, if $\Sigma$ is weakly outermost, we compute
\begin{align*}
|\Sigma| - |\Sigma_t| = -\int_0^t H(s) \bigg( \int_{\Sigma_s} \phi \bigg) ds \le 0,
\end{align*}
since $H(t) \ge 0$. Here, we applied the fundamental theorem of calculus together with the first variation of area formula. 
Therefore, if $\Lambda \ge 0$, we obtain
\begin{align*}
H'(t) \int_{\Sigma_t} \frac{1}{\phi}
&\le C \bigg( \frac{1}{|\Sigma|} - \frac{1}{|\Sigma_t|} \bigg)
= C \int_0^t \frac{H(s)}{|\Sigma_s|^2} \bigg( \int_{\Sigma_s} \phi \bigg) ds.
\end{align*}

When $\Lambda < 0$, we instead write
\begin{align*}
H'(t) \int_{\Sigma_t} \frac{1}{\phi}
&\le C \bigg( \frac{1}{|\Sigma|} - \frac{1}{|\Sigma_t|} \bigg) + \Lambda ( |\Sigma| - |\Sigma_t|)\\
&= \int_0^t \bigg( \frac{C}{|\Sigma_s|^2} - \Lambda \bigg) H(s) \bigg( \int_{\Sigma_s} \phi \bigg) ds.
\end{align*}

In either case, we conclude that
\begin{align*}
H'(t) \int_{\Sigma_t} \frac{1}{\phi} \le \int_0^t H(s)\xi(s)ds, \quad \forall t \in [0, \delta),
\end{align*}
for some function $\xi(t) \ge 0$. Thus, by Lemma~3.2 in \cite{Mendes2019}, it follows that $H(t) \le 0$ for all $t \in [0, \delta)$. If $\Sigma$ is weakly outermost, this guarantees that $H(t) = 0$ for all $t$, and all of the inequalities above must be equalities.

If $\Sigma$ is area-minimizing, we similarly find
\begin{align*}
0 \ge |\Sigma| - |\Sigma_t| = -\int_0^t H(s) \bigg( \int_{\Sigma_s} \phi \bigg) ds \ge 0,
\end{align*}
so again $H(t) = 0$ for all $t$, and all inequalities become equalities.

Standard computations as in \cite{Mendes2025} then yield the desired result.
\end{proof}

\section{Proof of the main results}\label{proofs}

This section is entirely devoted to the proof of Theorems~\ref{compact_thm}, \ref{isotopy_thm}, and \ref{complete}.

\subsection{Compact cases: Theorems~\ref{compact_thm} and \ref{isotopy_thm}}

\begin{proof}[Proof of Theorem~\ref{compact_thm}]
We divide the proof into two cases:

\medskip
\noindent
\textbf{Case $\Lambda > 0$.} We first show that $4\Lambda Q(\Sigma)^2 \leq 1$. 

Assume, by contradiction, that $4\Lambda Q(\Sigma)^2 > 1$. Since $[\Sigma] \neq 0$ in $H_2(M;\mathbb{Z})$ (as, otherwise, we would have $\Sigma = \partial M$ and, in particular, $Q(\Sigma) = 0$), we may apply existence results from Geometric Measure Theory \cite{Federer} for area-minimizing surfaces in integral homology classes. This yields a smooth, embedded, closed, oriented, area-minimizing surface $\Sigma'$ that is homologous to $\Sigma$. Furthermore, since $H_2(M, \Sigma) = 0$ and $\Sigma$ is connected, $\Sigma'$ must itself be connected.

If $\Sigma' \cap \partial M \neq \varnothing$, then by the weak mean-convexity of $\partial M$, the surface $\Sigma'$ must coincide with a connected component of $\partial M$. Proposition~\ref{area estimate} then implies that $\Sigma'$ is minimal and satisfies
\begin{align*}
4\Lambda Q(\Sigma)^2 > 1 \geq 4\Lambda Q(\Sigma')^2.
\end{align*}
However, since $\Sigma$ and $\Sigma'$ are homologous, they enclose the same total charge; that is, $Q(\Sigma) = Q(\Sigma')$. This yields a contradiction.

If instead $\Sigma' \cap \partial M = \varnothing$, then $\Sigma'$ is an area-minimizing minimal surface entirely contained in the interior of $M$. In this case, the same inequality holds:
\begin{align*}
4\Lambda Q(\Sigma)^2 > 1 \geq 4\Lambda Q(\Sigma')^2,
\end{align*}
again leading to a contradiction.

Thus, in either case, we conclude that $4\Lambda Q(\Sigma)^2 \leq 1$.

Next, assume
\begin{align*}
|\Sigma| \leq \frac{2\pi}{\Lambda} \Big(1 - \sqrt{1 - 4\Lambda Q(\Sigma)^2} \Big).
\end{align*}

As above, this inequality implies that $[\Sigma] \neq 0$ in $H_2(M;\mathbb{Z})$. Again, let $\Sigma'$ be a connected area-minimizing surface in the homology class of $\Sigma$. As before, whether or not $\Sigma'$ intersects $\partial M$, it is an area-minimizing minimal surface and, by Proposition~\ref{area estimate},
\begin{align*}
\frac{2\pi}{\Lambda} \Big(1 - \sqrt{1 - 4\Lambda Q(\Sigma)^2} \Big) 
&\geq |\Sigma| 
\geq |\Sigma'| 
\geq \frac{2\pi}{\Lambda} \Big(1 - \sqrt{1 - 4\Lambda Q(\Sigma)^2} \Big),
\end{align*}
from which it follows that
\begin{align*}
|\Sigma| = |\Sigma'| = \frac{2\pi}{\Lambda} \Big(1 - \sqrt{1 - 4\Lambda Q(\Sigma)^2} \Big).
\end{align*}
In particular, $\Sigma$ is also area-minimizing. We used here that $Q(\Sigma) = Q(\Sigma')$.

Since $\Sigma$ is area-minimizing and its area attains the first equality in \eqref{eq1}, we may apply Proposition~\ref{area estimate} to see that $\mathfrak{g}(\Sigma) = 0$ and Proposition~\ref{rigidity} to obtain a collar neighborhood $U \cong [0, \delta) \times \Sigma$ of $\Sigma$ in $M$ satisfying the conclusion of the theorem.

Let $\Sigma_t \cong \{t\} \times \Sigma$. As $t \nearrow \delta$, by standard compactness results for stable minimal surfaces with uniformly bounded area, the surfaces $\Sigma_t$ converge (subsequentially) to a closed, embedded, area-minimizing surface $\Sigma_\delta$. Since $(M^3, g)$ has positive scalar curvature, it follows that $\Sigma_\delta$ is diffeomorphic to either $\mathbb{S}^2$ or $\mathbb{RP}^2$. In the latter case, $M^3$ is diffeomorphic to $\mathbb{RP}^3 \setminus \text{ball}$, which contradicts the fact that $M^3$ has more than one boundary component\linebreak (as, otherwise, we would have $Q(\Sigma)=0$).

Let $S:=\partial M\setminus\Sigma\neq\varnothing$. If $\Sigma_\delta \cap S \neq \varnothing$, then the maximum principle implies $\Sigma_\delta = S$, establishing the result. Otherwise, if $\Sigma_\delta \cap S = \varnothing$, we may replace $\Sigma$ by $\Sigma_\delta$ and reapply Proposition~\ref{rigidity}. This replacement is valid because $|\Sigma_\delta| = |\Sigma|$ and $Q(\Sigma_\delta) = Q(\Sigma)$.  Let $T>0$ the maximal time in which this flow exists and is smooth. The conclusion then follows by a continuity argument, extending the local splitting to the entire manifold $M$.

\medskip
\noindent
\textbf{Case $\Lambda = 0$.} 
Assume that
\begin{align*}
|\Sigma| \leq 4\pi Q(\Sigma)^2.
\end{align*}
Let $\Sigma'$ be as before. It follows from Proposition~\ref{area estimate} that
\begin{align*}
4\pi Q(\Sigma)^2 \geq |\Sigma| \geq |\Sigma'| \geq 4\pi Q(\Sigma)^2,
\end{align*}
and hence $|\Sigma| = |\Sigma'| = 4\pi Q(\Sigma)^2$. In particular, $\Sigma$ is area-minimizing and attains the lower bound in \eqref{eq2}. Thus, as in the case $\Lambda > 0$, we may apply Proposition~\ref{area estimate} to classify the topology of $\Sigma$ and Proposition~\ref{rigidity} to obtain the desired local splitting. The global extension then follows from the same continuity argument.
\end{proof}

A fundamental tool in the proof of our next theorem is the classical result of Meeks-Simon-Yau \cite{MeeksSimonYau}, which asserts the existence of an area-minimizing surface in the isotopy class of a connected, incompressible surface $\Sigma \subset M$, assuming that $M$ is an irreducible , closed  Riemannian 3-manifold. This result remains valid when $\partial M \neq \emptyset$, provided that the mean curvature of $\partial M$ with respect to the outward-pointing normal vector is nonnegative, that is, $\p M$ is weakly mean-convex (see Section~6 of \cite{MeeksSimonYau}).
We also point out that Hass and Scott \cite{HassScott} gave an alternative proof of this result without resorting to Geometric Measure Theory (cf.\ Theorem~5.1 in \cite{HassScott}).

We are now ready to prove the second main theorem.

\begin{proof}[Proof of Theorem~\ref{isotopy_thm}]

We divide the proof into two cases:

\medskip
\noindent\textbf{Case 1: $\inf |E|^2 > |\Lambda|$ and $H_2(M, \Sigma) = 0$.}

Assume that
\begin{align*}
|\Sigma| \le \frac{2\pi}{|\Lambda|} \Big( \sqrt{1 + 4|\Lambda| Q(\Sigma)^2} - 1 \Big).
\end{align*}

As in the proof of Theorem~\ref{compact_thm}, let $\Sigma'\subset M$ be a closed, connected area-minimizing minimal surface in the homology class of $\Sigma$. Note that $\Sigma'$ is topologically a 2-sphere, since $R_g \ge 2\Lambda + 2|E|^2 > 0$.

It follows from Proposition~\ref{area estimate} that
\begin{align*}
\frac{2\pi}{|\Lambda|} \Big( \sqrt{1 + 4|\Lambda| Q(\Sigma)^2} - 1 \Big)\ge|\Sigma|\ge|\Sigma'|\ge\frac{2\pi}{|\Lambda|}\Big( \sqrt{1 + 4|\Lambda| Q(\Sigma)^2} - 1 \Big),
\end{align*}
that is,
\begin{align*}
|\Sigma|=|\Sigma'|=\frac{2\pi}{|\Lambda|}\Big( \sqrt{1 + 4|\Lambda| Q(\Sigma)^2} - 1 \Big).
\end{align*}
The argument then proceeds similarly to the proof of Theorem~\ref{compact_thm}.

\medskip
\noindent\textbf{Case 2: $\Sigma$ is incompressible, and $M$ is irreducible and does not contain any closed non-orientable embedded surfaces.}

Denote by $\mathcal{J}(\Sigma)$ the isotopy class of $\Sigma$ in $M$. Under our assumptions, we can apply the version of Theorem~5.1 in \cite{HassScott} (see also \cite{MeeksSimonYau}) for 3-manifolds with nonempty boundary to obtain a closed embedded surface $\Sigma' \in \mathcal{J}(\Sigma)$ such that
$$
|\Sigma'| = \inf_{\hat{\Sigma} \in \mathcal{J}(\Sigma)} |\hat{\Sigma}|.
$$
Moreover, since $\Sigma$ and $\Sigma'$ are diffeomorphic and homologous, we have
\begin{align}\label{conseq}
\g(\Sigma') = \g(\Sigma) \quad \text{and} \quad Q(\Sigma') = Q(\Sigma).
\end{align}

On the other hand, since the boundary $\partial M$ is weakly mean-convex, the maximum principle implies that either $\Sigma'$ is contained in the interior of $M$ or it coincides with one of the connected components of $\partial M$. In any case, it follows from Proposition~\ref{area estimate} that
\begin{align*}
|\Sigma'| \ge \frac{2\pi}{|\Lambda|} \Big( (\g(\Sigma') - 1) + \sqrt{(\g(\Sigma') - 1)^2 + 4|\Lambda| Q(\Sigma')^2} \Big).
\end{align*}

Using \eqref{conseq} together with the fact that $\Sigma'$ is area-minimizing, we conclude that
\begin{align*}
|\Sigma| \ge \frac{2\pi}{|\Lambda|} \Big( (\g(\Sigma) - 1) + \sqrt{(\g(\Sigma) - 1)^2 + 4|\Lambda| Q(\Sigma)^2} \Big).
\end{align*}
Furthermore, if equality holds, then
\begin{align*}
|\Sigma| = |\Sigma'| = \frac{2\pi}{|\Lambda|} \Big( (\g(\Sigma) - 1) + \sqrt{(\g(\Sigma) - 1)^2 + 4|\Lambda| Q(\Sigma)^2} \Big).
\end{align*}

The desired conclusion then follows similarly to the rigidity statement in Theorem~\ref{compact_thm} (see also the proof of Theorem~5 in \cite{Nunes}).
\end{proof}

\subsection{Noncompact case: Theorem~\ref{complete}}

Before proving the next result, we recall some facts about $\mu$-bubbles in our setting. 

Let $(M^3, g)$ be a complete, noncompact, orientable Riemannian 3-manifold with connected compact boundary $\partial M$. By an argument similar to the proof of Lemma~2.1 in \cite{Zhu2023} (see also the proof of Proposition~1.3 in \cite{Zhu2024ArXiv}), there exists a proper, surjective, smooth function $\phi : M \to [0, +\infty)$ satisfying
\begin{align*}
\partial M = \phi^{-1}(0) \quad \text{and} \quad \operatorname{Lip} \phi < 1.
\end{align*}

Given a smooth function $h:[0,T)\to\mathbb{R}$, we define the functional 
\begin{align*}
\mu^h(\Omega)=\mathcal{H}^2(\partial^*\Omega)-\int_{M\setminus\Omega}h\circ\phi\,\mathrm{d}\mathcal{H}^3
\end{align*}
on the class of Caccioppoli sets $\Omega\subset M$ with reduced boundary $\p^*\Omega$, subject to the condition
\begin{align*}
M\setminus\Omega\Subset\phi^{-1}([0,T)).
\end{align*}
For a reference on Caccioppoli sets, see \cite{Giusti}.

Let $\Omega(t)$ be a smooth one-parameter family of regions with $\Omega(0) = \Omega$ and normal speed $\varphi$ at $t = 0$. The \emph{first variation} of $\mu^h$ is given by
\begin{equation*}
\frac{d}{dt} \bigg|_{t=0} \mu^h(\Omega(t)) = \int_{\partial \Omega} (H - h \circ \phi) \varphi,
\end{equation*}
where $H$ is the mean curvature of $\partial \Omega$ with respect to the outward-pointing unit normal vector field $N$ along $\partial \Omega$. 

Since $\varphi$ is arbitrary, if $\Omega$ is \emph{critical} for $\mu^h$, we conclude that $H = h \circ \phi$ along $\partial \Omega$. In this case, we say that $\Omega$ is a \emph{$\mu$-bubble}, and $\p\Omega$ is a surface with prescribed mean curvature $h \circ \phi|_{\p\Omega}$.

Given a $\mu$-bubble $\Omega(0)=\Omega$, the \emph{second variation formula} for $\mu^h$ is given by  
\begin{equation}\label{second}
\frac{d^2}{dt^2}\bigg|_{t=0} \mu^h(\Omega(t)) = \int_{\partial \Omega} |\nabla \varphi|^2 - ( \operatorname{Ric}_g(N, N)+|A|^2 + \langle \nabla (h \circ \phi), N \rangle ) \varphi^2,
\end{equation}
where $\operatorname{Ric}_g$ is the Ricci tensor of $(M^3,g)$, and $A$ is the second fundamental form of $\p\Omega\subset(M^3,g)$.

We say that a $\mu$-bubble $\Omega$ is \emph{stable} if the second variation of $\mu^h$ is nonnegative for all normal variations $\Omega(t)$ of $\Omega$, i.e., if the right-hand side of \eqref{second} is nonnegative for all smooth functions $\varphi$.

Recall that a \emph{band} is a connected compact manifold $\hat{M}$ together with a decomposition of its boundary:
\begin{equation*}
\partial \hat{M} = \partial_{-} \,\dot{\cup}\, \partial_{+},
\end{equation*}
where $\partial_{-}$ and $\partial_{+}$ are nonempty unions of connected components of $\partial \hat{M}$.

\begin{definition}
Let $(\hat{M}, \hat{g}, \partial_{-}, \partial_{+})$ be a Riemannian band. A smooth function $b : \operatorname{int}(\hat{M}) \to \mathbb{R}$ is said to satisfy the \emph{barrier condition} if, for each connected component $S \subset \partial_{+}$ (resp., $S \subset \partial_{-}$), either:
\begin{itemize}
\item $b$ smoothly extends to $S$ and satisfies $H_S\ge b|_S$ (resp., $H_S\ge-b|_S$), where $H_S$ is the mean curvature of $S$ with respect to the outward normal; or
\item $b\to-\infty$ (resp., $b\to+\infty$) towards $S$.
\end{itemize}
\end{definition}

The existence and regularity of a minimizer for $\mu^h$ among Caccioppoli sets, assuming the barrier condition (and in fact for a more general functional), were initially claimed by Gromov (see Section~5.1 of \cite{Gromov2023}). A rigorous proof and detailed exposition were later provided by Zhu \cite[Proposition~2.1]{Zhu2021} and by Chodosh and Li \cite[Proposition~12]{ChodoshLi2024}.

\begin{proof}[Proof of Theorem~\ref{complete}]
Let $\phi:M\to[0,+\infty)$ be as above.

According to \cite[Lemma~2.3]{Zhu2023} (see also the proof of Proposition~1.3 in \cite{Zhu2024ArXiv}), for any $\varepsilon\in(0,1)$, we can construct a smooth function $h_\varepsilon:[0,\frac{1}{3\varepsilon})\to(-\infty,0]$ such that:
\begin{itemize}
\item $h_\varepsilon$ satisfies
\begin{align*}
\frac{3}{2}h_\varepsilon^2+2h_\varepsilon'=6\varepsilon^2\quad\mbox{on}\quad\big[\tfrac{1}{6},\tfrac{1}{3\varepsilon}\big)
\end{align*}
and
\begin{align*}
\bigg|\frac{3}{2}h_\varepsilon^2+2h_\varepsilon'\bigg|\le C\varepsilon,
\end{align*}
where $C>0$ is a universal constant;
\item $h_\varepsilon(0)=0$, $h_\varepsilon'<0$, and
\begin{align*}
\lim_{t\to\frac{1}{3\varepsilon}}h_\varepsilon(t)=-\infty;
\end{align*}
\item as $\varepsilon\to0$, $h_\varepsilon$ converges smoothly to the zero function on any closed interval contained in $[0,\frac{1}{3\varepsilon})$.
\end{itemize}

Let $\varepsilon\in(0,1)$ be such that $\frac{1}{3 \varepsilon}$ is a regular value of $\phi$, and define the following prescribed-mean-curvature functional:
\begin{equation*}
\mu^\varepsilon(\Omega)=\mathcal{H}^{2}(\partial^*\Omega)-\int_{M \setminus\Omega}h_{\varepsilon}\circ\phi\,\mathrm{d}\mathcal{H}^{3}
\end{equation*}
on the class $\mathcal{C}_\varepsilon$ of Caccioppoli sets in $M$, with reduced boundary $\partial^*\Omega$, such that
\begin{equation*}
M\setminus\Omega\Subset\phi^{-1}\big(\big[0,\tfrac{1}{3\varepsilon}\big]\big).
\end{equation*}

For each $\varepsilon$, define the Riemannian band
\begin{equation*}
M_{\varepsilon} = \phi^{-1}\big(\big[0, \tfrac{1}{3\varepsilon}\big]\big),
\end{equation*}
with boundary decomposition
\begin{align*}
\partial_- = \phi^{-1}(0)=\p M,\quad \partial_+ &= \phi^{-1}\big(\tfrac{1}{3\varepsilon}\big).
\end{align*}

Note that each $h_{\varepsilon}\circ\phi$, restricted to $\operatorname{int}(M_\varepsilon)$, satisfies the barrier condition (since $\p M$ is weakly mean-convex). Combined with Sard's theorem, this implies that for each $\varepsilon \in (0,1)$, there exists a smooth minimizer $\Omega_\varepsilon \in \mathcal{C}_\varepsilon$ for the functional $\mu^\varepsilon$; see \cite[Proposition~2.1]{Zhu2021} and \cite[Proposition~12]{ChodoshLi2024}.

By taking a sequence of positive numbers $\varepsilon_k \rightarrow 0$ as $k \rightarrow \infty$, we thus obtain a corresponding sequence of smooth minimizers $\Omega_k$ in $\mathcal{C}_{\varepsilon_k}$ for the functionals $\mu^{k} := \mu^{\varepsilon_k}$.

Since each boundary $\partial\Omega_k$ is homologous to the surface $\partial M$, we may select the homologically nontrivial components of $\partial\Omega_k$, whose union we denote by $\Sigma_k$. On the other hand, because $\p M$ is connected, $\Sigma_k$ is homologous to $\p M$, and $H_2(M,\p M)=0$, we obtain that $\Sigma_k$ is also connected. Then, it follows from the second variation formula and the Gauss equation that,
\begin{eqnarray*}
\int_{\Sigma_k}|\nabla\varphi|^2&\ge&\int_{\Sigma_k}(\operatorname{Ric}_g(N_k,N_k)+|A_k|^2+\langle\nabla(h_{\varepsilon_k}\circ\phi),N_k\rangle)\varphi^2\\
&=&\frac{1}{2}\int_{\Sigma_k}(R_g-R_{\Sigma_k}+|A_k|^2+H_k^2+2\langle\nabla(h_{\varepsilon_k}\circ\phi),N_k\rangle)\varphi^2\\
&\ge&\frac{1}{2}\int_{\Sigma_k}\Big(\Big|A_k-\frac{H_k}{2}g_k\Big|^2-R_{\Sigma_k}\Big)\varphi^2\\
&&+\frac{1}{2}\int_{\Sigma_k}\Big(R_g+\Big(\frac{3}{2}h_{\varepsilon_k}^2+2h_{\varepsilon_k}'\Big)\circ\phi\Big)\varphi^2,
\end{eqnarray*}
for each $\varphi\in C^\infty(\Sigma_k)$, where $N_k$ is the outward unit normal to $\Sigma_k$, $A_k$ is the second fundamental form of $\Sigma_k$, $H_k$ is the mean curvature of $\Sigma_k$ with respect to $N_k$, and $R_{\Sigma_k}$ is the scalar curvature of $\Sigma_k$ with respect to the induced metric $g_k$. Above we used that 
\begin{align*}
\langle\nabla(h_{\varepsilon_k}\circ\phi),N_k\rangle\ge-|h_{\varepsilon_k}'\circ\phi|=h_{\varepsilon_k}'\circ\phi,
\end{align*}
since $h_{\varepsilon_k}'<0$ and $\operatorname{Lip}\phi<1$.

By taking the test function $\varphi \equiv 1$, the fact that $R_g \geq 2\Lambda + 2|E|^2$, and the Gauss-Bonnet theorem imply that
\begin{eqnarray*}
2\pi\chi(\Sigma_k)&\ge &\frac{1}{2}\int_{\Sigma_k}(R_g-C\varepsilon_k)\\
&\ge &\int_{\Sigma_k}(\Lambda+|E|^2)-\frac{1}{2}C\varepsilon_k|\Sigma_k|.
\end{eqnarray*}

Since we are assuming the function $\Lambda + |E|^2$ is uniformly positive, say $\Lambda + |E|^2 \geq \delta > 0$, we have
\begin{align*}
2\pi\chi(\Sigma_k) \geq \Big( \delta - \frac{1}{2} C\varepsilon_k \Big) |\Sigma_k|.
\end{align*}
Therefore, $\chi(\Sigma_k) > 0$, that is, $\Sigma_k$ is topologically a 2-sphere for $k$ sufficiently large. In this case,
\begin{eqnarray*}
4\pi &\ge& \int_{\Sigma_k}(\Lambda+|E|^2)-\frac{1}{2}C\varepsilon_k|\Sigma_k|\\
&\ge &\Lambda |\Sigma_k|+\frac{16\pi^2 Q(\Sigma_k)^2}{|\Sigma_k|} - \frac{1}{2} C\varepsilon_k |\Sigma_k|.
\end{eqnarray*}
Then we obtain
\begin{align*}
4\pi|\Sigma_k|\ge\Big(\Lambda-\frac{1}{2}C\varepsilon_k\Big)|\Sigma_k|^2+16\pi^2Q(\Sigma_k)^2.
\end{align*}

We now analyze two cases based on the cosmological constant $\Lambda$:

\medskip
\noindent
\textbf{Case $\Lambda\in\mathbb{R}\setminus\{0\}$.} Taking $k$ sufficiently large so that $\Lambda-\frac{1}{2}C\varepsilon_k>0$ if $\Lambda>0$, we have

\begin{align*}
4\Big(\Lambda-\frac{1}{2}C\varepsilon_k\Big)Q(\Sigma_k)^2\le 1
\end{align*}
and
\begin{align*}
|\Sigma_k|\ge\frac{2\pi}{\big(\Lambda-\tfrac{1}{2}C\varepsilon_k\big)}\bigg(1-\sqrt{1-4\big(\Lambda-\tfrac{1}{2}C\varepsilon_k\big)Q(\Sigma_k)^2}\bigg).
\end{align*}

\medskip
\noindent
\textbf{Case $\Lambda=0$.} In this case,
\begin{align*}
|\Sigma_k|\ge4\pi Q(\Sigma_k)^2-\frac{1}{8\pi}C\varepsilon_k|\Sigma_k|^2.
\end{align*}

In either case, observe that
\begin{align*}
|\Sigma_k| \leq |\partial \Omega_k| \leq \mu^k(\Omega_k) \leq \mu^k(M) = |\partial M|.
\end{align*}
Here we used that $h_{\varepsilon_k} \leq 0$ and that $\Omega_k$ minimizes $\mu^k$. Therefore, by taking the limit as $k \to +\infty$, and using the fact that $Q(\Sigma_k) = Q(\partial M)$ for each $k$, we obtain inequalities \eqref{iewq1} and \eqref{iewq2}.

Assume the equality holds in \eqref{iewq1} (resp., \eqref{iewq2}). We claim that $\Sigma_k$ must have nonempty intersection with the compact subset $\mathcal{K} = \phi^{-1}([0,\frac{1}{6}])$. In fact, if $\mathcal{K} \cap \Sigma_k = \varnothing$, then $\Sigma_k \subset \phi^{-1}((\frac{1}{6},\frac{1}{3\varepsilon_k}))$, and repeating the earlier argument would yield
\begin{align*}
4\pi |\Sigma_k| > \Lambda |\Sigma_k|^2 + 16\pi^2 Q(\Sigma_k)^2,
\end{align*}
since
\begin{align*}
\frac{3}{2} h_{\varepsilon_k}^2 + 2 h_{\varepsilon_k}' = 6 \varepsilon_k^2 \quad \text{on} \quad \big[\tfrac{1}{6},\tfrac{1}{3\varepsilon_k}\big),
\end{align*}
which leads to a contradiction provided \eqref{iewq1} (resp., \eqref{iewq2}) would be strict.

Now, since the surfaces $\Sigma_k$ are $h_{\varepsilon_k}$-minimizing boundaries with uniformly bounded area, we can invoke the curvature estimates in \cite[Theorem~3.6]{XinZhu}. Thus, after passing to a subsequence if necessary, $\Sigma_k$ converges smoothly, in a locally graphical sense with multiplicity one, to an area-minimizing minimal boundary $\Sigma$.

Note that $\Sigma$ may be either compact or noncompact. To rule out the noncompact case, we proceed as follows. Let $R_\Sigma$ and $A_\Sigma$ denote the scalar curvature and second fundamental form of $\Sigma$, respectively. Since $\Lambda + |E|^2$ is uniformly positive, and the stability of $\Sigma$ implies the nonnegativity of the operator
$$
-\Delta_\Sigma + \frac{1}{2}R_\Sigma - \frac{1}{2}(|A_\Sigma|^2 + \Lambda + |E|^2),
$$
it follows that $\Sigma$ must consist of spherical components (see \cite[Lemma~4.1]{Zhu2023} or \cite[Theorem~8.8]{GromovLawson}). Moreover, since each $\Sigma_k$ is connected, the limit $\Sigma$ must be a 2-sphere, and for sufficiently large $k$, each $\Sigma_k$ becomes a graph over $\Sigma$. In particular, $\Sigma$ is homologous to $\Sigma_k$.

Using the inequality $|\partial M| \geq |\Sigma|$ together with the stability inequality for minimal surfaces, we deduce that, in the case $\Lambda \in \mathbb{R} \setminus \{0\}$, the following chain of inequalities holds:
$$
\frac{2\pi}{\Lambda}\Big(1 - \sqrt{1 - 4\Lambda Q(\partial M)^2}\Big)
= |\partial M| \geq |\Sigma| 
\geq \frac{2\pi}{\Lambda}\Big(1 - \sqrt{1 - 4\Lambda Q(\partial M)^2}\Big),
$$
from which it follows that
$$
|\partial M| = |\Sigma| = \frac{2\pi}{\Lambda}\Big(1 - \sqrt{1 - 4\Lambda Q(\partial M)^2}\Big).
$$
Above we used that $Q(\Sigma)=Q(\Sigma_k)=Q(\p M)$.

This shows that $\partial M$ is an area-minimizing surface that saturates either the first inequality in \eqref{eq1} or the inequality \eqref{eq3}, depending on the sign of~$\Lambda$ (the analysis in the case $\Lambda = 0$ is analogous). Therefore, by Proposition~\ref{rigidity}, there exists a collar neighborhood $U \cong [0, \rho) \times \partial M$ of $\partial M$ in $M$ satisfying the conclusions of the theorem. The result then follows by a continuity argument.
\end{proof}

\bibliographystyle{amsplain}
\bibliography{bibliography3.bib}

\end{document}